\documentclass[sigconf]{acmart}

\usepackage{amsmath}
\usepackage{amsthm}
\usepackage{enumitem}

\theoremstyle{plain}
\newtheorem{theorem}{Theorem}

\newtheorem{corollary}{Corollary}

\theoremstyle{definition}
\newtheorem{definition}{Definition}

\setcopyright{acmlicensed}
\copyrightyear{2026}
\acmYear{2026}
\acmDOI{XXXXXXX.XXXXXXX}

\acmConference[Conference acronym 'XX]{Make sure to enter the correct
  conference title from your rights confirmation email}{June 03--05,
  2018}{Woodstock, NY}
\acmISBN{978-1-4503-XXXX-X/2018/06}

\ccsdesc[500]{Information systems~Recommender systems}

\keywords{Generative Recommendation, Next Token Prediction, Maximum Likelihood Estimation, Recommender Systems}

\title{On the Equivalence Between Auto-Regressive Next Token Prediction and Full-Item-Vocabulary Maximum Likelihood Estimation in Generative Recommendation--A Short Note}
\author{Yusheng Huang}
\affiliation{%
  \institution{Kuaishou Technology}
  \city{Beijing}
  \country{China}}
\email{huangyusheng@kuaishou.com}
\authornote{Corresponding author}

\author{Shuang Yang}
\affiliation{%
  \institution{Kuaishou Technology}
  \city{Beijing}
  \country{China}}
\email{yangshuang08@kuaishou.com}

\author{Zhaojie Liu}
\affiliation{%
  \institution{Kuaishou Technology}
  \city{Beijing}
  \country{China}}
\email{zhaotianxing@kuaishou.com}

\author{Han Li}
\affiliation{%
  \institution{Kuaishou Technology}
  \city{Beijing}
  \country{China}}
\email{lihan08@kuaishou.com}

\begin{document}

\begin{abstract}
Generative recommendation (GR) has emerged as a widely adopted paradigm in industrial sequential recommendation. Current GR systems follow a similar pipeline: tokenization for item indexing, next-token prediction as the training objective and auto-regressive decoding for next-item generation. However, existing GR research mainly focuses on architecture design and empirical performance optimization, with few rigorous theoretical explanations for the working mechanism of auto-regressive next-token prediction in recommendation scenarios. 

In this work, we formally prove that \textbf{the k-token auto-regressive next-token prediction (AR-NTP) paradigm is strictly mathematically equivalent to full-item-vocabulary maximum likelihood estimation (FV-MLE)}, under the core premise of a bijective mapping between items and their corresponding k-token sequences. We further show that this equivalence holds for both cascaded and parallel tokenizations, the two most widely used schemes in industrial GR systems. Our result provides the first formal theoretical foundation for the dominant industrial GR paradigm, and offers principled guidance for future GR system optimization.

\end{abstract}

\maketitle

\section{Introduction}
Generative recommendation (GR) has emerged as a transformer-based architecture in sequential recommendation over the past few years, attracting great attention from both academia and industry. Recently, several industrial scale GR systems (e.g. OneRec\cite{deng2025onerec}, OneLive\cite{wang2026onelive}, OneSug\cite{guo2026onesug}, OneMall\cite{zhang2026onemall}, GPR\cite{zhang2025gpr} and DOS\cite{yin2026dual}, etc.) have been deployed with promising business metrics gain, showing the effectiveness of the GR paradigm.

Following the pioneering work of Tiger \cite{rajput2023recommender}, the mainstream GR paradigm has converged to a standardized framework\cite{hou2025survey}: it adopts semantic-ID (SID) tokenization for candidate item indexing, uses next-token prediction (NTP) as the core training objective for next-item modeling, and generates the target candidate via auto-regressive (AR) decoding at inference time. This three-stage pipeline has become a de facto standard for GR system design, as validated by its universal adoption in the aforementioned industrial deployments.

However, existing GR works mainly focus on model architecture design and empirical performance optimization, with few rigorous theoretical explanations for the working mechanism of NTP in GR. This gap limits both the academic understanding of the GR paradigm and the industrial optimization of GR systems.

The most relevant theoretical progress comes from the natural language processing domain: recent work \cite{malach2024auto} formally proves that AR next-token predictors are universal function learners, and attributes the strong generalization capability of modern large language models (LLMs) primarily to the AR-NTP training scheme. In the GR domain, concurrent work \cite{ding2026well} systematically verifies that GR’s superiority over conventional ID-based models stems from better generalization, and links GR’s item-level generalization to token-level memorization via controlled experiments; paper\cite{zhao2026farewell} further shows that semantic tokens have greater scaling potential compared to item IDs in large-scale GR systems.

Despite these advances, there is still a lack of rigorous mathematical explanation for the working mechanism of AR-NTP in GR from the perspective of recommendation systems. Hence, in this paper, we provide this short note to formally prove \textbf{the equivalence between auto-regressive next-token prediction (AR-NTP) and full-item-vocabulary maximum likelihood estimation (FV-MLE)} for generative recommendation.

Our contributions are listed below:
\begin{itemize}[topsep=2pt, itemsep=1pt, leftmargin=*]
    \item We provide the first formal theoretical foundation for the standard GR paradigm, by rigorously proving that the widely used k-token AR-NTP is strictly mathematically equivalent to FV-MLE, under the core premise of a bijective mapping between items and their corresponding k-token sequences. This resolves the long-standing gap between GR’s widespread empirical success and the lack of formal theoretical analysis.
    \item We further prove that this equivalence holds for both cascaded and parallel tokenization, the two mainstream tokenization schemes adopted in industrial GR systems.
\end{itemize}

\section{Notation and Definitions}

In this section, we define notations and introduce concepts in generative recommendation system. Main notations are listed below:
\begin{itemize}[topsep=2pt, itemsep=1pt, leftmargin=*]
    \item $h \in \mathbb{R}^d$: User and context representation encoded from the user's historical behavior sequence, user-side static features and request's context features.
    \item $\mathcal{V}$: Full item vocabulary, with total size of items $N = |\mathcal{V}|$.
    \item $k$: Number of tokens used to represent a single candidate item. $k > 0$.
    \item $X$: Size of each token's codebook. For simplicity, we assume that $N = X^k$ which ensures full coverage of the item vocabulary and we use the same codebook size for all tokens.
    \item $\mathcal{V}_1, \mathcal{V}_2, \dots, \mathcal{V}_k$: Token codebooks for each token position, with $|\mathcal{V}_m| = X$ for all $m \in [1,k]$.
    % \item $\Phi(i) = (t_1^i, t_2^i, \dots, t_k^i)$ $t_i\leftrightarrow[t_1^i, t_2^i, \dots, t_k^i]$: The k-token sequence for item $i$, with an \textbf{one-to-one mapping} between items and valid k-token sequences.
    \item $\Phi(i) = (t_1^i, t_2^i, \dots, t_k^i)$: The k-token sequence for item $i$, with an \textbf{bijection mapping}, i.e. $\Phi: \mathcal{V} \leftrightarrow \mathcal{S}$ between item space $V$ and valid k-token sequences space $S$.
    \item $\ell(t_m \mid h, t_1, \dots, t_{m-1})$: Conditional predict logit of the $m$-th token given user context $h$ and all previous tokens, output by the generative recommendation model through auto-regressive next token prediction (NTP).
    \item $\ell(h,i)$: Item $i$'s logit. In NTP setting, we define it as the sum of token-level conditional logits:
    $$\ell(h,i) = \sum_{m=1}^k \ell(t_m^i \mid h, t_1^i, \dots, t_{m-1}^i)$$
\end{itemize}

In this paper, we focus on \textbf{single-next-item generation}: given a user's historical sequence and the request context features which is encoded into embedding $h$, the model generates the $k$-token sequence corresponding to the target item via step-by-step auto-regressive prediction (list-wise generation is out of the scope of this paper).
\begin{itemize}[leftmargin=*, noitemsep, topsep=0pt]
    \item \textbf{Training Pipeline}: In the NTP paradigm, each training sample is treated as positive sample $i^+$ (e.g., the next exposed item for exposure modeling or the next interacted item for user preference modeling). The model is trained to predict the token sequence of this positive item in an auto-regressive manner: each step predicts the next token conditioned on the user context $h$ and all previously generated tokens of the sequence.
    \item \textbf{Inference Pipeline}: Given $h$, the recommendation model generates the most likely $k$-token sequences via beam search, which are then mapped back to the corresponding items by token-item mapping as the final results.
\end{itemize}

We formalize this single-next-item $k$-token auto-regressive NTP framework with the following definition:

\begin{definition}[\textbf{Single-Next-Item $k$-Token Auto-Regressive NTP}]
$k$-token NTP models the joint generative probability of the target item given user\&context embedding $h$ via Bayesian chain rule. The item-level generative probability could be decomposed into a product of token-level conditional probabilities:
\begin{equation}
p(i \mid h) = p(t_1^i, t_2^i, \dots, t_k^i \mid h) = \prod_{m=1}^k p(t_m^i \mid h, t_1^i, \dots, t_{m-1}^i)
\end{equation}
Each token-level conditional probability is parameterized by a full-codebook Softmax:
\begin{equation}
p(t_m \mid h, t_1, \dots, t_{m-1}) = \frac{\exp\left(\ell(t_m \mid h, t_1, \dots, t_{m-1})\right)}{Z_m(h, t_1, \dots, t_{m-1})}
\end{equation}
where $Z_m(\cdot)$ is the token-level partition function for the $m$-th codebook:
\begin{equation}
Z_m(h, t_1, \dots, t_{m-1}) = \sum_{t_m \in \mathcal{V}_m} \exp\left(\ell(t_m \mid h, t_1, \dots, t_{m-1})\right).
\end{equation}
For model training, standard \textbf{teacher forcing} strategy is usually adopted for auto-regressive learning: when predicting the $m$-th token of the target item sequence, the model takes the ground truth preceding tokens of the target item $[t_1^{i^+}, \dots, t_{m-1}^{i^+}]$ as conditional input, rather than tokens generated by the model in previous steps. Under this training strategy, the final negative log-likelihood (NLL) loss of $k$-token NTP is the sum of per-step token prediction losses on the target item's ($i^+$) ground truth token sequence $[t^{i^+}_1,\cdots,t^{i^+}_k]$:
\begin{equation}
\mathcal{L}_{\text{NTP}} = -\sum_{m=1}^k \log p(t_m^{i^+} \mid h, t_1^{i^+}, \dots, t_{m-1}^{i^+}).
\end{equation}

\end{definition}

Having formalized the single-next-item $k$-token auto-regressive NTP framework, we center this paper on a core theoretical question: \textbf{what is the fundamental nature of this widely adopted NTP paradigm}?

To rigorously answer this question, we first establish a theoretically grounded benchmark for distribution fitting over the full item vocabulary: full-item-vocabulary Maximum Likelihood Estimation (MLE). In statistics, MLE is a standard, well-justified criterion for fitting a model distribution to observed data, with rigorously proven asymptotic properties: as the sample size increases, the MLE estimator converges in probability to the true underlying data-generating distribution, making it the de facto gold standard for evaluating distribution fitting performance.

We formalize this benchmark objective for recommendation retrieval tasks in the following definition.

\begin{definition}[\textbf{Full-Item-Vocabulary MLE}]
Full-item-vocabulary maximum likelihood estimation (MLE) is an objective for fitting the target distribution over the full item vocabulary given a user and context embedding $h$. It maximizes the log-likelihood of positive samples from the target distribution, with the likelihood defined over the full item vocabulary. The corresponding Negative Log-Likelihood (NLL) loss is:
\begin{equation}
\mathcal{L}_{\text{Full\_MLE}} = -\log\left( \frac{\exp\left(\ell(h, i^+)\right)}{Z_{\text{full}}(h)} \right)
\label{loss_MLE}
\end{equation}
where:
\begin{itemize}[leftmargin=*, noitemsep]
    \item $i^+$ denotes a positive sample from the target distribution, where the definition of it is fully determined by the objective of the recommendation task (e.g., exposed item, user-clicked item, or other target behavior samples aligned with the task goal);
    \item $\ell(h, i)$ is the logits of item $i$;
    \item $Z_{\text{full}}(h)$ is the full-item-vocabulary partition function, which normalizes logits into a valid probability distribution over the entire item vocabulary:
    \begin{equation}
    Z_{\text{full}}(h) = \sum_{i \in \mathcal{V}} \exp\left(\ell(h, i)\right)
    \end{equation}
\end{itemize}
\end{definition}

\section{Proof of the Main Equivalence Theorem}
In this section, we formally prove our core theoretical result: the single-next-item $k$-token auto-regressive NTP paradigm, widely adopted in GR, is strictly mathematically equivalent to full-item-vocabulary MLE. This result provides a rigorous theoretical foundation for the empirical effectiveness of NTP-based GR model.

\begin{theorem}[Equivalence of $k$-Token NTP and Full-Vocabulary MLE]
Assume that there is a \textbf{bijective mapping} between the full item vocabulary $\mathcal{V}$ and the Cartesian product of the token codebooks $\mathcal{V}_1 \times \mathcal{V}_2 \times \dots \times \mathcal{V}_k$, i.e., each item $i \in \mathcal{V}$ uniquely maps to a $k$-token sequence $(t_1^i, t_2^i, \dots, t_k^i)$, and vice versa. Then the negative log-likelihood (NLL) loss of $k$-token auto-regressive NTP is \textbf{strictly equivalent} to the NLL loss of full-item-vocabulary MLE:
$$\mathcal{L}_{\text{NTP}} \equiv \mathcal{L}_{\text{Full\_MLE}}$$
\end{theorem}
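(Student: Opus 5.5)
The plan is to compare the two losses at the level of the \emph{item distribution} they induce, rather than term by term. The bijection $\Phi:\mathcal{V}\leftrightarrow\mathcal{V}_1\times\dots\times\mathcal{V}_k$ lets me rewrite any sum over items as a sum over $k$-token sequences. I then aim to show two things: that the auto-regressive product in Definition 1 is itself a softmax over the full item vocabulary, and that $\mathcal{L}_{\text{NTP}}$ is exactly the negative log of that softmax evaluated at $i^+$.

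\textbf{Step 1 (rewrite the NTP loss).} Substituting the token-level softmax into $\mathcal{L}_{\text{NTP}}$ gives
\begin{equation*}
\mathcal{L}_{\text{NTP}} = -\sum_{m=1}^k \ell(t_m^{i^+}\mid h,t_{<m}^{i^+}) + \sum_{m=1}^k \log Z_m(h,t_{<m}^{i^+}).
\end{equation*}
By the definition of $\ell(h,i)$, the first sum is $-\ell(h,i^+)$. It therefore remains to relate $\sum_m \log Z_m$ along the path of $i^+$ to $\log Z_{\text{full}}(h)$.

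\textbf{Step 2 (normalization over the full vocabulary).} Using the bijection, I write
\begin{equation*}
\sum_{i\in\mathcal{V}} p(i\mid h) = \sum_{t_1\in\mathcal{V}_1}\cdots\sum_{t_k\in\mathcal{V}_k}\prod_{m=1}^k p(t_m\mid h,t_{<m}).
\end{equation*}
I then sum out the innermost index first. Each inner sum is a full token-level softmax and equals $1$, so the nested sums collapse to $1$ by induction on $k$. Hence $p(\cdot\mid h)$ is a valid distribution on $\mathcal{V}$. Surjectivity of $\Phi$ ensures that no probability mass falls on invalid sequences, and injectivity ensures that no item is counted twice.

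\textbf{Step 3 (identify the item logit, the main obstacle).} The subtle point is that, with raw logits, $Z_{\text{full}}(h)=\sum_{t_1}e^{\ell_1}\sum_{t_2}e^{\ell_2}\cdots$ is a \emph{nested} sum. It does not generally equal $\prod_m Z_m(h,t^{i^+}_{<m})$ evaluated along the single path of $i^+$. A naive term-by-term identification therefore fails. I would resolve this by using the shift invariance of each token-level softmax. Replacing $\ell(t_m\mid h,t_{<m})$ by $\ell-\log Z_m(h,t_{<m})$ changes no conditional probability, and hence leaves $\mathcal{L}_{\text{NTP}}$ unchanged. In this normalized gauge every $Z_m=1$, so $\ell(h,i)=\log p(i\mid h)$. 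Step 2 then gives $Z_{\text{full}}(h)=\sum_i p(i\mid h)=1$. Consequently
\begin{equation*}
\mathcal{L}_{\text{Full\_MLE}} = -\log\frac{\exp(\ell(h,i^+))}{Z_{\text{full}}(h)} = -\log p(i^+\mid h) = \mathcal{L}_{\text{NTP}}.
\end{equation*}
Equivalently, and independently of gauge: the NTP model defines the full-vocabulary softmax with item logit $\log p(i\mid h)$, and its NLL on $i^+$ coincides with $\mathcal{L}_{\text{NTP}}$.

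\textbf{Conclusion.} The two objectives therefore share the same loss value for every sample, and hence the same minimizers and gradients. I will state explicitly that "$\equiv$" is meant in this sense. The theorem then follows, with the bijection used exactly in Step 2.
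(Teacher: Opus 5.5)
Your obstacle in Step 3 is real, and it is exactly the step the paper's own proof gets wrong. The paper expands the path product $\prod_{m}Z_m(h,t^{i^+}_{<m})$ ``by the distributive law'' into the nested sum $\sum_{t_1}\cdots\sum_{t_k}\exp\bigl(\sum_m\ell(t_m\mid h,t_{<m})\bigr)$. It then identifies this sum with $Z_{\text{full}}(h)$ via the bijection. That factorization holds only when each $Z_m$ is independent of the prefix, as in the parallel case of the corollary. Your Step 2, where the normalized conditionals telescope to $1$, is the correct form of that summation.

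The gap is in your repair. The shift $\ell(t_m\mid h,t_{<m})\mapsto\ell(t_m\mid h,t_{<m})-\log Z_m(h,t_{<m})$ leaves $\mathcal{L}_{\text{NTP}}$ invariant, but not $\mathcal{L}_{\text{Full\_MLE}}$. It moves the item logit $\ell(h,i)$ by $-\sum_m\log Z_m(h,t^i_{<m})$. In the cascaded case this amount depends on $i$, so it changes the full-vocabulary softmax. Working in the normalized gauge therefore replaces the paper's objective with the FV-MLE whose item logit is $\log p(i\mid h)$, and for that objective the claim is a tautology.

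In general, $\mathcal{L}_{\text{NTP}}=\mathcal{L}_{\text{Full\_MLE}}$ holds for every $i^+$ if and only if $\prod_m Z_m(h,t^i_{<m})$ is the same for all $i$. With raw logits the statement is false, so no argument can close the gap. Take $k=2$, $\mathcal{V}_1=\{a,b\}$, $\mathcal{V}_2=\{c,d\}$, $\ell(t_1\mid h)=0$, $\ell(t_2\mid h,a)=0$ and $\ell(t_2\mid h,b)=\log 3$.
\begin{itemize}
\item Every item has $p(i\mid h)=1/4$, so $\mathcal{L}_{\text{NTP}}=\log 4$.
\item $Z_{\text{full}}(h)=1+1+3+3=8$, so $\mathcal{L}_{\text{Full\_MLE}}=\log 8$ at $i^+=(a,c)$ and $\log(8/3)$ at $i^+=(b,c)$.
\end{itemize}
Your claim of shared minimizers and gradients also fails for raw logits. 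NTP is blind to prefix-dependent shifts $c(t_{<m})$ of the step-$m$ logits, while FV-MLE is not.

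You should state explicitly what you actually prove. Either define $\ell(h,i):=\sum_m\log p(t^i_m\mid h,t^i_{<m})$, or assume that the path product of token-level partition functions is independent of the item. Under either version your argument goes through cleanly.
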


\begin{proof}
We provide the proof in the following four steps.

\paragraph{Step 1: Decompose the item-level joint probability of NTP}
Using the Bayesian chain rule for auto-regressive generation, the conditional probability of item $i$ given user context $h$ is decomposed as:
\begin{equation}
p(i^+ \mid h) = p(t_1^{i^+}, t_2^{i^+}, \dots, t_k^{i^+} \mid h) = \prod_{m=1}^k p(t_m^{i^+} \mid h, t_1^{i^+}, \dots, t_{m-1}^{i^+})
\end{equation}
Substitute the token-level full-codebook softmax definition into the product:
\begin{equation}
p(i^+ \mid h) = \prod_{m=1}^k \frac{\exp\left(\ell(t_m^{i^+} \mid h, t_1^{i^+}, \dots, t_{m-1}^{i^+})\right)}{Z_m(h, t_1^{i^+}, \dots, t_{m-1}^{i^+})}
\end{equation}
where $\ell(\cdot)$ denotes the token-level conditional logit, and $Z_m(\cdot) = \sum_{t_m \in \mathcal{V}_m} \exp\left(\ell(t_m \mid h, t_1, \dots, t_{m-1})\right)$ is the token-level partition function for the $m$-th decoding step.

\paragraph{Step 2: Merge token-level logits into the item-level joint logit}
By the exponential product rule, the numerator of the joint probability is merged into a single exponential term, consistent with our item-level joint logit definition:
\begin{align}
\prod_{m=1}^k \exp\left(\ell(t_m^{i^+} \mid h, t_1^{i^+}, \dots, t_{m-1}^{i^+})\right) & = \exp\left( \sum_{m=1}^k \ell(t_m^{i^+} \mid h, t_1^{i^+}, \dots, t_{m-1}^{i^+}) \right)\\ & = \exp\left(\ell(h,i^+)\right)
\end{align}
We thus rewrite the item-level conditional probability as:
\begin{equation}
p(i^+ \mid h) = \frac{\exp\left(\ell(h,i)\right)}{\prod_{m=1}^k Z_m(h, t_1^{i^+}, \dots, t_{m-1}^{i^+})}
\label{eq.NTP_ZM}
\end{equation}

\paragraph{Step 3: Prove the equivalence of partition functions}
We expand the product of token-level partition functions via the distributive law of multiplication over addition:
% \begin{equation}
% \prod_{m=1}^k Z_m(h, t_1, \dots, t_{m-1}) = \sum_{t_1 \in \mathcal{V}_1} \exp\left(\ell(t_1 \mid h)\right) \cdot \sum_{t_2 \in \mathcal{V}_2} \exp\left(\ell(t_2 \mid h, t_1)\right) \cdot \dots \cdot \sum_{t_k \in \mathcal{V}_k} \exp\left(\ell(t_k \mid h, t_1, \dots, t_{k-1})\right)
% \end{equation}
\begin{equation}
\prod_{m=1}^k Z_m(h, t_1, \dots, t_{m-1}) = \sum_{t_1 \in \mathcal{V}_1} \exp\left(\ell(t_1 \mid h)\right) \cdot \sum_{t_2 \in \mathcal{V}_2} \exp\left(\ell(t_2 \mid h, t_1)\right) \cdots
\end{equation}
Merging the nested sums into a single summation over all valid $k$-token sequences:
\begin{equation}
\prod_{m=1}^k Z_m(\cdot) = \sum_{t_1 \in \mathcal{V}_1} \sum_{t_2 \in \mathcal{V}_2} \dots \sum_{t_k \in \mathcal{V}_k} \exp\left( \sum_{m=1}^k \ell(t_m \mid h, t_1, \dots, t_{m-1}) \right)
\end{equation}
By the bijective mapping premise in Theorem 1, summing over all valid $k$-token sequences is exactly equivalent to summing over all items in the full vocabulary $\mathcal{V}$. Substituting the item-level joint logit definition, we get:
\begin{equation}
\prod_{m=1}^k Z_m(\cdot) = \sum_{i \in \mathcal{V}} \exp\left(\ell(h,i)\right) = Z_{\text{full}}(h)
\end{equation}
% This strictly proves that the product of token-level partition functions is a global constant for fixed $h$, independent of the specific item $i$, and exactly equal to the full-item-vocabulary partition function $Z_{\text{full}}(h)$.

This strictly proves that the product of token-level partition functions is exactly equal to the full-item-vocabulary partition function $Z_{\text{full}}(h)$.

\paragraph{Step 4: Prove the equivalence of loss functions}
Substitute the partition function equivalence result back into the item-level conditional probability of NTP in Eq.~\ref{eq.NTP_ZM}:
\begin{equation}
p(i^+ \mid h) = \frac{\exp\left(\ell(h,i)\right)}{Z_{\text{full}}(h)}
\end{equation}
This is exactly the conditional probability definition of full-item-vocabulary MLE. For the training sample $i^+$, the NLL loss of NTP is:
\begin{equation}
\mathcal{L}_{\text{NTP}} = -\log p(i^+ \mid h) = -\log\left( \frac{\exp\left(\ell(h,i^+)\right)}{Z_{\text{full}}(h)} \right) = \mathcal{L}_{\text{MLE}}
\end{equation}
This completes the full proof of the equivalence theorem.
\end{proof}

\paragraph{Computational Complexity Advantage}
Our equivalence theorem reveals the core reason for the widespread industrial adoption of the AR-NTP-based GR paradigm: it achieves the theoretical optimality of FV-MLE, while avoiding the prohibitive computational complexity of full-item softmax:
\begin{itemize}[topsep=2pt, itemsep=1pt, leftmargin=*]
\item The standard FV-MLE requires a full softmax normalization with computational complexity $O(V)$, which is infeasible for industrial-scale $V$ (typically $10^6$ to $10^9$).
\item In contrast, the k-token NTP paradigm decomposes the full-item softmax into k sequential token-level softmax operations, each with complexity $O(V_m)$ (where $V_m$ is the size of each token codebook, typically $256\sim8192$ in industrial practice). The overall computational complexity is reduced to $O(k \cdot V_m)$, which is much lower than FV-MLE for industrial-scale item space.
\end{itemize}

Our theorem proves that this complexity reduction comes with no loss of theoretical optimality: as long as the bijective mapping condition is satisfied, AR-NTP is strictly equivalent to FV-MLE.

\section{Corollary: Applicability to Cascaded and Parallel Tokenization}
In GR, two tokenization paradigms dominate mainstream industrial system design:
\begin{itemize}[topsep=2pt, itemsep=1pt, leftmargin=*]
    \item Cascaded tokenization, implemented via residual quantization-based methods such as RQ-VAE\cite{rajput2023recommender} and RQ-KMeans\cite{deng2025onerec}, which has long been a standard practice in GR systems.
    \item Parallel tokenization\cite{hou2025generating} based on product quantization (e.g., optimized product quantization, OPQ), an emerging paradigm in recent advances that achieves promising performance while enabling efficient inference via multi-token prediction (MTP).
\end{itemize}
Notably, while parallel tokenizers are compatible with MTP for fast decoding, they also fully support AR decoding. It is therefore critical to establish that our core equivalence theorem holds for both of these mainstream tokenization schemes, which extends the generalizability of our theoretical result. Hence, we derive the following corollary to formalize the universal applicability of our result:

\begin{corollary}[Equivalence Applicability to Parallel \& Cascaded tokenization Schemes]
% Let the full item vocabulary $\mathcal{V}$ and fixed token sequence length $k$ be consistent with the main theorem. 
For both:
\begin{itemize}[topsep=2pt, itemsep=1pt, leftmargin=*]
    \item \textbf{Parallel tokenization}: where the $m$-th token codebook $\mathcal{V}_m$ is independent of tokens from other codebooks, i.e., $t_m \perp\!\!\!\perp t_j, \forall j \neq m$;
    \item \textbf{Cascaded tokenization}: where the $m$-th token codebook $\mathcal{V}_m$ is dependent on all preceding tokens($t_1, \dots, t_{m-1}$), i.e., $t_m \mid t_1, \dots, t_{m-1}$;
\end{itemize}
the mathematical equivalence between $k$-token AR-NTP and FV-MLE holds, as long as there exists a bijective mapping $\Phi: \mathcal{V} \to \mathcal{S}$, where $\mathcal{S}$ is the valid $k$-token sequence space generated by the corresponding tokenization scheme.
\end{corollary}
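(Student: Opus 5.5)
The plan is to reduce both cases to Theorem 1, but with one generalization. Theorem 1 assumed that the valid sequence space is the full Cartesian product $\mathcal{V}_1\times\cdots\times\mathcal{V}_k$. For cascaded tokenization that is not quite the right object. The admissible $m$-th tokens may depend on the prefix, giving a prefix-dependent set $\mathcal{V}_m(t_1,\dots,t_{m-1})$, and $\mathcal{S}$ is then a prefix tree rather than a product. So I would first restate Steps 1--4 of Theorem 1's proof for an arbitrary valid sequence space $\mathcal{S}$. In this version each token-level partition function $Z_m(h,t_1,\dots,t_{m-1})$ sums only over the children of the prefix $(t_1,\dots,t_{m-1})$ in the tree of $\mathcal{S}$. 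Steps 1, 2 and 4 carry over verbatim, since they use only the chain rule, the softmax form of each factor, and the definition $\ell(h,i)=\sum_m \ell(t_m^i\mid h,t_{<m}^i)$.

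The key step, and the main obstacle, is the analogue of Step 3. In Theorem 1 the product $\prod_m Z_m$ was expanded as if it did not depend on the prefix. That is only literally valid when the conditional logits do not depend on earlier tokens. In general the $Z_m$ evaluated along $i^+$'s prefix depend on $i^+$, so the identity $\prod_m Z_m = Z_{\text{full}}(h)$ does not hold pointwise for an arbitrary logit function.

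I would handle this by the standard telescoping argument on the tree. Define the subtree mass
\[
W(t_{<m})=\sum_{(t_m,\dots,t_k)} \exp\Big(\sum_{j\ge m}\ell(t_j\mid h,t_{<j})\Big),
\]
where the sum runs over completions $(t_m,\dots,t_k)$ with $(t_{<m},t_m,\dots,t_k)\in\mathcal{S}$. Then show that the auto-regressive product of conditionals defines a normalized distribution on $\mathcal{S}$: summing $\prod_m p(t_m\mid h,t_{<m})$ from the leaf level upward, each level collapses to $1$. Next, use the bijection $\Phi$ to transport this distribution to $\mathcal{V}$. The result is a softmax over the full vocabulary, $p(i\mid h)=\exp(\tilde\ell(h,i))/\sum_{i'\in\mathcal{V}}\exp(\tilde\ell(h,i'))$, with item logit $\tilde\ell(h,i)=\log p(\Phi(i)\mid h)$. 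Equivalently, $\tilde\ell(h,i)$ equals $\ell(h,i)$ minus the prefix-normalization corrections $\sum_m[\log Z_m - \log W(t_{<m})+\log W(t_{<m+1})]$, which telescope.

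This gives the precise equivalence. $\mathcal{L}_{\text{NTP}}$ equals $\mathcal{L}_{\text{Full\_MLE}}$ for an item logit that is a function of $h$ and $i$ alone. The equivalence is literally with the raw sum $\ell(h,i)$ exactly when the per-prefix logits are normalized so that $W$ is prefix-independent, for instance when $Z_m\equiv 1$ or when there is no prefix dependence. I would state this caveat explicitly rather than hide it.

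With the general lemma in hand, the two cases are specializations.

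\textbf{Parallel tokenization.} Here $t_m\perp\!\!\!\perp t_j$, so $\mathcal{S}=\mathcal{V}_1\times\cdots\times\mathcal{V}_k$ and the bijection is exactly the hypothesis of Theorem 1. When the model's logits also respect this independence, the $Z_m$ are prefix-free. The product then factorizes by the distributive law exactly as in Step 3, and $\prod_m Z_m=Z_{\text{full}}(h)$ holds literally. When parallel tokens are decoded auto-regressively with prefix-dependent logits, the general lemma still applies, since the support is the full product.

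\textbf{Cascaded tokenization.} $\mathcal{S}$ is the residual-quantization prefix tree, and $\Phi$ is a bijection onto it by assumption. Duplicate or unused codes must be excluded, for example by the usual collision-resolving extra token. The lemma then applies directly: the sum over $\mathcal{S}$ equals the sum over $\mathcal{V}$ via $\Phi$, and the loss identity follows as in Step 4.

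The only real work is verifying the normalization induction and being careful about which item logit the equivalence is stated for.
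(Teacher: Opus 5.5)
Your diagnosis is correct: the gap is in the paper's argument, not yours. The paper handles the cascaded case by asserting that ``all derivation steps of the main theorem hold without modification.'' But Step~3 of Theorem~1 rewrites $\prod_{m=1}^k Z_m(h,t_1^{i^+},\dots,t_{m-1}^{i^+})$ as the nested sum $\sum_{t_1}\exp(\ell(t_1\mid h))\sum_{t_2}\exp(\ell(t_2\mid h,t_1))\cdots$. In the product each factor is evaluated at the fixed ground-truth prefix, while in the nested sum the prefixes have become summation variables. That nested sum is your root mass $W(\emptyset)=Z_{\text{full}}(h)$. It equals the path product for every $i$ exactly when, at every prefix, all children have the same subtree mass.

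A two-level example shows the corollary fails as written. Take $\mathcal{V}_1=\{a,b\}$, $\mathcal{V}_2=\{c,d\}$, and all conditional logits $0$ except $\ell(c\mid h,b)=\log 3$. Then $Z_1=2$, $Z_2(h,a)=2$, $Z_2(h,b)=4$, so AR-NTP gives $(a,c)$ probability $1/4$. But $Z_{\text{full}}(h)=1+1+3+1=6$, so FV-MLE gives $1/6$. Your caveat is therefore necessary, not cosmetic. The paper's parallel case is valid only because its proof imposes $p(t_m\mid h,t_{<m})=p(t_m\mid h)$ on the model's conditionals, not merely independence in the tokenizer. Your parallel argument draws exactly this distinction and agrees with the paper there.

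Relative to the paper, you replace the invalid pointwise identity $\prod_m Z_m=Z_{\text{full}}(h)$ by normalization on the prefix tree. That is sound, but be clear about what it delivers. Beyond showing that the AR product is normalized on $\mathcal{S}$, your ``general lemma'' only observes that a normalized distribution is the softmax of its own log-probabilities. It says nothing about the additive logit $\ell(h,i)=\sum_m\ell(t_m^i\mid h,t_{<m}^i)$ that defines FV-MLE. The substantive content is your condition for when the raw sum works.

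Two refinements:
\begin{itemize}
\item ``Exactly when $W$ is prefix-independent'' is too strong. The exact condition is that siblings have equal subtree mass, equivalently that $\prod_m Z_m(h,t^i_{<m})$ is the same for every $i$. For instance, with $k=3$, $Z_2(h,a)=2$, $Z_3(h,a,\cdot)=3$, $Z_2(h,b)=3$, $Z_3(h,b,\cdot)=2$, all path products agree, yet $W(t_{<3})$ is $3$ under $a$ and $2$ under $b$.
\item Restricting $Z_m$ to the valid children of a prefix changes the model, since the paper's $Z_m$ sums over all of $\mathcal{V}_m$. If $\mathcal{S}$ is a proper subset of the product and training uses an unmasked softmax, mass leaks to invalid sequences and even your normalization step fails. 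State masking as an explicit assumption.
\end{itemize}
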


\begin{proof}
We first restate the two sufficient and necessary conditions for the core equivalence theorem to hold, which could be derived from the complete proof of the main theorem:
\begin{enumerate}[label=(\roman*), leftmargin=*, noitemsep, topsep=2pt, itemsep=1pt]
    \item \textbf{Bijective Mapping Condition}: There exists a bijection between the full item vocabulary $\mathcal{V}$ and the valid $k$-token sequence space $\mathcal{S}$, denoted as:
    $$\Phi: \mathcal{V} \leftrightarrow \mathcal{S}$$
    where each item $i \in \mathcal{V}$ maps to a unique $k$-token sequence $\Phi(i) = (t_1^i, t_2^i, \dots, t_k^i) \in \mathcal{S}$, and vice versa.
    \item \textbf{Probability Decomposition Condition}: The item-level conditional probability is modeled via the Bayesian chain rule for auto-regressive generation:
    \begin{equation}
    p(i \mid h) = \prod_{m=1}^k p(t_m^i \mid h, t_1^i, \dots, t_{m-1}^i)
    \end{equation}
\end{enumerate}

In industrial practice, the satisfaction of the bijective mapping condition depends on the quality of the quantization algorithm: suboptimal GR designs which lead to codebook collapse or high collision rate will violate the bijection condition, while some methods like RecGPT\cite{jiang2025recgpt} apply finite scalar quantization (FSQ) to guarantee collision-free one-to-one mapping. For the brevity of the following proof, we assume the bijective mapping condition holds for all discussed tokenization schemes.

We now prove that: both cascaded and parallel tokenizations satisfy the above two conditions in the AR-NTP setting, hence the equivalence theorem applies.

\paragraph{Proof for Cascaded Tokenization}
By definition, cascaded tokenization (e.g., RQ-VAE, RQ-KMeans) generates the $m$-th token conditioned on all preceding tokens, which exactly matches the conditional probability form in Condition (ii). Under our bijective mapping assumption, Condition (i) is also satisfied. The item-level joint probability decomposition of cascaded coding is identical to the auto-regressive form used in the main theorem, so all derivation steps of the main theorem hold without modification. Thus, the equivalence between AR-NTP and full-vocabulary MLE holds for cascaded tokenization.

\paragraph{Proof for Parallel Tokenization}
Under our bijective mapping assumption, Condition (i) is strictly satisfied for valid parallel tokenization designs. For Condition (ii), parallel tokenization is a special case of the auto-regressive decomposition, where the token-level conditional probability is independent of preceding tokens, i.e.:
\begin{equation}
p(t_m^i \mid h, t_1^i, \dots, t_{m-1}^i) = p(t_m^i \mid h), \quad \forall m \in \{1,2,\dots,k\}
\end{equation}
This form fully conforms to the Bayesian chain rule in Condition (ii), and we derive the item-level probability of parallel tokenization via the following steps:
\begin{enumerate}[leftmargin=*, noitemsep, topsep=2pt, itemsep=1pt]
\item Substituting the conditional independence form into the item-level joint probability gives:
\begin{equation}
p(i \mid h) = \prod_{m=1}^k p(t_m^i \mid h) = \prod_{m=1}^k \frac{\exp\left(\ell(t_m^i \mid h)\right)}{Z_m(h)}
\end{equation}
where $Z_m(h) = \sum_{t_m \in \mathcal{V}_m} \exp\left(\ell(t_m \mid h)\right)$ is the token-level partition function for the $m$-th independent codebook.
\item Merge the numerator into the item-level joint logit:
$$\prod_{m=1}^k \exp\left(\ell(t_m^i \mid h)\right) = \exp\left(\sum_{m=1}^k \ell(t_m^i \mid h)\right) = \exp\left(\ell(h,i)\right)$$
\item By the distributive law of multiplication over addition, expand the product of token-level partition functions as:
    \begin{align}
    \prod_{m=1}^k Z_m(h) & = \prod_{m=1}^k \sum_{t_m \in \mathcal{V}_m} \exp\left(\ell(t_m \mid h)\right)\\
     & = \sum_{t_1 \in \mathcal{V}_1} \dots \sum_{t_k \in \mathcal{V}_k} \exp\left(\sum_{m=1}^k \ell(t_m \mid h)\right)
    \end{align}
\item By the bijective mapping condition, summing over all valid $k$-token sequences is strictly equivalent to summing over the full item vocabulary $\mathcal{V}$, hence:
    \begin{equation}
    \prod_{m=1}^k Z_m(h) = \sum_{i \in \mathcal{V}} \exp\left(\ell(h,i)\right) = Z_{\text{full}}(h)
    \end{equation}
\end{enumerate}
The final item-level probability is identical to the full-vocabulary MLE (FV-MLE) form in Eq.~\ref{loss_MLE}:
\begin{equation}
p(i \mid h) = \frac{\exp\left(\ell(h,i)\right)}{Z_{\text{full}}(h)}
\end{equation}
Thus the strict mathematical equivalence between AR-NTP and FV-MLE is satisfied for parallel tokenization.
\end{proof}

This completes the full proof of the corollary.

\section{Conclusion}
In this paper, we address the long-standing open question of why the AR-NTP paradigm delivers consistent empirical effectiveness in GR by demonstrating a mathematical equivalence between the widely adopted k-token AR-NTP paradigm and FV-MLE. We further extend this core result to the two most dominant tokenization schemes in industrial GR systems: cascaded tokenization and parallel tokenization, verifying the universal applicability of our equivalence theorem. Our equivalence theorem reveals the essential nature of GR's generalization advantage: \textbf{the AR-NTP paradigm inherently achieves unbiased fitting of the full item distribution}. 

For future work, we plan to develop a theoretical analysis of the performance boundary of GR when the bijective mapping condition is violated (e.g., codebook collapse, token sequence collision), to quantify the impact of tokenization quality on GR performance.

\bibliographystyle{ACM-Reference-Format}
\bibliography{ref}

\end{document}